\newcommand{\eqref}[1]{(\ref{#1})} 
\newcommand{\sL}{\mathcal{L}}
\newcommand{\RR}{\mathbb{R}}
\newcommand{\NN}{\mathbb{N}}
\newcommand{\EE}{\mathbb{E}}
\newcommand{\var}{\mathrm{var}}
\newcommand{\PP}{\mathbb{P}}
\newcommand{\ZZ}{\mathbb{Z}}
\newcommand{\walklength}{\sL_n}
\renewcommand{\d}{\mathrm{d}}
\newcommand{\ind}{{\mathds{1}}}
\newcommand{\floor}[1]{\lfloor #1 \rfloor}
\newtheorem{theorem}{Theorem}[section]
\newtheorem{lemma}[theorem]{Lemma}
\newtheorem{remark}[theorem]{Remark}
\newtheorem{proposition}[theorem]{Proposition}
\begin{document}
\title{The length of self-avoiding walks on the complete graph}
\author{Youjin Deng$^{1,2}$, Timothy M Garoni$^3$, Jens Grimm$^3$, Abrahim Nasrawi$^3$ and Zongzheng Zhou$^3$}
\address{$^1$ Department of Modern Physics, University of Science and Technology of China, Hefei, Anhui 230026, China}
\address{$^2$ National Laboratory for Physical Sciences at Microscale, University of Science and Technology of China, Hefei, Anhui 230026, China}
\address{$^3$ ARC Centre of Excellence for Mathematical and Statistical Frontiers (ACEMS),
School of Mathematics, Monash University, Clayton, Victoria 3800, Australia}
\ead{\mailto{yjdeng@ustc.edu.cn}, \mailto{tim.garoni@monash.edu}, \mailto{jens.grimm@monash.edu}, \mailto{abrahim.nasrawi@monash.edu}, 
  \mailto{eric.zhou@monash.edu}}
\vspace{10pt}
\begin{indented}
\item[]14 June 2019
\end{indented}

\begin{abstract}
We study the variable-length ensemble of self-avoiding walks on the complete graph.  We obtain the leading order asymptotics of the mean and
variance of the walk length, as the number of vertices goes to infinity. Central limit theorems for the walk length are also established, in
various regimes of fugacity. Particular attention is given to sequences of fugacities that converge to the critical point, and the effect of
the rate of convergence of these fugacity sequences on the limiting walk length is studied in detail. Physically, this corresponds to
studying the asymptotic walk length on a general class of pseudocritical points.
\end{abstract}

\noindent{\it Keywords}: Self-avoiding walk

\section{Introduction}
The self-avoiding walk (SAW) is a fundamental and well-studied model in discrete statistical mechanics~\cite{MadrasSlade96}.  An $n$-step
SAW on a graph $G$ is simply a path on $G$ of length $n$; i.e., a sequence of $n+1$ distinct vertices, such that each pair of consecutive
vertices are adjacent.  By far the most well-studied choice of $G$ is the infinite lattice $\ZZ^d$, where the focus is on SAWs rooted at the
origin. The fixed-length (canonical) ensemble considers uniformly random SAWs of given length, while the variable-length (grand canonical)
ensemble weights SAWs with a fugacity for the number of steps.  

For $d\ge5$, many properties of SAWs on $\ZZ^d$ can be established rigorously via the lace expansion~\cite{HaraSlade1992,Hara2008}, which
show that SAWs behave in many ways like simple random walks (SRWs) in this case. Moreover, the behaviour of other statistical mechanical
models on $\ZZ^d$, in particular the Ising model of ferromagnetism~\cite{FriedliVelenik2017}, have also been found to coincide with SAW/SRW
behaviour in sufficiently high dimension; for example, all three models display the same asymptotics of their two-point
functions~\cite{Sakai2007}.

By contrast with the infinite lattice, comparatively few results have been obtained for SAWs on finite graphs; see~\cite{Yadin2016} and
references therein. In particular, it seems that the first study of SAWs on the complete graph was only undertaken very
recently~\cite{Yadin2016}. This is in stark contrast to other models in discrete statistical mechanics, such as the Ising model and
percolation, where the complete graph models, corresponding to the Curie-Weiss model and Erd\H{o}s-R\'enyi random graphs, respectively, are
classical topics~\cite{FriedliVelenik2017,JansonLuczakRucinski2000}.

The complete graph on $n$ vertices, $K_n$, is a simple undirected graph in which each pair of distinct vertices are adjacent. One motivation for studying statistical mechanical models on the complete graph is that one expects the large $n$ asymptotic behaviour
of the model on $K_n$ to coincide with the large $L$ behaviour\footnote{More precisely, one compares $n$ with $L^d$.}  of the model on the
torus $\ZZ_L^d$, for all sufficiently large $d$.  Physically, $\ZZ_L^d$ simply corresponds to a box of side-length $L$, with periodic boundary conditions.

In contrast to the well-understood behaviour of the SAW and Ising models on the infinite lattice $\ZZ^d$ when $d\ge5$, their finite-size
scaling behaviour on finite boxes appears to be surprisingly
subtle, and has been the subject of longstanding debate~\cite{Brezin1982,Fisher1983,BinderNauenbergPrivmanYoung1985,Binder1985,RickwardtNielabaBinder1994,KennaBerche2014,WittmannYoung2014,FloresSolaBercheKennaWeigel2016,GrimmElciZhouGaroniDeng2017,Grimm2018,ZhouGrimmFangDengGaroni2018}.
In particular, there is now strong numerical evidence that the critical scaling of the SAW and Ising two-point functions is boundary
dependent for $d\ge5$: on boxes with free boundary
conditions~\cite{LundowMarkstrom2011,WittmannYoung2014,LundowMarkstrom2014,GrimmElciZhouGaroniDeng2017,Grimm2018,ZhouGrimmFangDengGaroni2018},
the critical scaling agrees with that observed on the infinite lattice, whereas with periodic boundary
conditions~\cite{Grimm2018,ZhouGrimmFangDengGaroni2018}, the scaling is of an anomalous form first conjectured by Papathanakos~\cite{Papathanakos2006}.

Similarly, it was observed numerically~\cite{ZhouGrimmFangDengGaroni2018} that
the mean walk length of SAW, restricted to a finite box in $\ZZ^d$, also depends strongly on the boundary conditions imposed. However,
as argued in~\cite{Grimm2018,ZhouGrimmFangDengGaroni2018}, there is strong numerical and theoretical evidence to suggest that
the SAW two-point function is \emph{only} affected by boundary conditions \emph{via} their effect on the mean walk length.

Specifically, it was proposed in~\cite{Grimm2018,ZhouGrimmFangDengGaroni2018} that the SAW and Ising two-point functions on high-dimensional
finite boxes should coincide with the two-point function of a simple random walk, just as occurs on the infinite lattice $\ZZ^d$, provided
the random walk has an appropriately chosen random (finite) length. It can be shown
rigorously~\cite{Grimm2018,ZhouGrimmDengGaroni2019,GrimmZhouDengGaroni2019} that the asymptotics of the two-point function of this
random-length random walk (RLRW) on a finite box\footnote{With dimension $d\ge3$.} is determined entirely by the walk length distribution; for given walk length asymptotics, the model predicts the same two-point function asymptotics for periodic and free boundary conditions.

As a first test of this RLRW scenario, it was observed~\cite{ZhouGrimmFangDengGaroni2018} that inputting the asymptotic walk length for the
critical complete-graph SAW~\cite{Yadin2016} into the RLRW model (on boxes with either free or periodic boundary conditions) correctly
predicts the Papathanakos scaling of the two-point function. However, this RLRW correspondence is seen to hold not only at the
thermodynamic critical point, but also at general pseudocritical points. For example, the RLRW picture predicts, and numerical evidence
strongly suggests~\cite{Grimm2018,ZhouGrimmDengGaroni2019,GrimmZhouDengGaroni2019}, that the Papathanakos scaling of the two-point function
can be observed with free boundary conditions, at an appropriate pseudocritical point, thus clarifying a recent
debate~\cite{BercheKennaWalter2012,WittmannYoung2014,LundowMarkstrom2016} on the matter.

Moreover, by studying the complete-graph SAW on a general family of pseudocritical points, it was shown~\cite{ZhouGrimmFangDengGaroni2018}
that a continuous family of scalings for the mean walk length was possible, and numerical
evidence~\cite{Grimm2018,ZhouGrimmFangDengGaroni2018} strongly suggested these results extend to tori with $d\ge5$. When inputted into the
RLRW model, these asymptotic mean lengths then predict a continuous family of possible anomalous scalings of the two-point function on the torus,
providing a broad generalisation of the Papathanakos conjecture. As a special case, this theory then predicts that the standard infinite
lattice scaling of the two-point function \emph{can} be observed on tori, at a certain (explicitly characterised) pseudocritical
point. These predictions for the two-point function are all strongly supported by simulations, both for SAW and for the Ising
model~\cite{Grimm2018,ZhouGrimmFangDengGaroni2018,ZhouGrimmDengGaroni2019}.

No proof was given in~\cite{ZhouGrimmFangDengGaroni2018} for the stated asymptotics of the mean walk length on the complete graph, and one
purpose of the current paper is to rigorously justify these claims. However, we go further. Motivated by the above applications, we provide
a systematic study of the walk length of the complete-graph SAW, for a variety of regimes of fugacity. We derive the leading order
asymptotics of both the mean and variance, and establish non-degenerate central limit theorems in each regime.

The outline of the remainder of the paper is as follows. In Section~\ref{subsec:results} we define the model more precisely, and state our
results. Section~\ref{sec:preliminaries} shows how to express the moments and distribution function of the walk length in terms of incomplete gamma functions, and summarises the known asymptotic behaviour of the latter. 
Section~\ref{sec:mean and variance} then proves the stated asymptotics for the mean and variance, while Section~\ref{sec:limit theorems} proves the stated limit theorems.

\subsection{Results}
\label{subsec:results}
Let $K_n$ denote the complete graph with vertex set $V(K_n) = \{0,1,2,\cdots,n-1\}$.
Because every pair of vertices in $K_n$ are adjacent, the set of $k$-step self-avoiding walks on $K_n$ starting at 0 is simply
\begin{equation}
 \Omega_{n,k} := \{(\omega_0,\omega_1,\cdots,\omega_k)\in V^{k+1}(K_n): \omega_0=0,\omega_i\neq\omega_j~\,\forall\,~ i\neq j\}
\end{equation}
It is easily seen that the number of such walks is then
\begin{equation}
|\Omega_{n,k}| = (n-1)(n-2)\cdots(n-k) = \frac{(n-1)!}{(n-k-1)!}
\label{eq:size of Omega_k}
\end{equation}
We denote by $\Omega_n:=\bigcup_{k=0}^{n-1}\Omega_{n,k}$ the set of all self-avoiding walks (starting from 0) on $K_n$.
The \emph{length} of a SAW, denoted by $\walklength:\Omega_n\rightarrow\NN$, is simply its number of steps, so that
$\walklength(\omega)=k$ for every $\omega\in\Omega_{n,k}$, and $0\le \walklength(\omega)\le n-1$ for every $\omega\in\Omega_n$.

For given $z>0$, we define the corresponding (variable-length) self-avoiding walk model on $K_n$ via
\begin{equation}
\label{eq:SAW measure}
  \PP_{n,z}(\omega) = 
  \frac{(z/n)^{\sL_n(\omega)}}{\sum_{\omega'\in\Omega_n}(z/n)^{\sL_n(\omega')}} \ \qquad\,\text{ for all }\, \ \omega\in\Omega_n
\end{equation}
As would be expected by analogy with the Curie-Weiss and Erd\H{o}s-R\'enyi models, we have scaled the fugacity by $n$ in the above
definition. This choice of scaling can be justified post facto via the non-trivial results it leads to in~\cite{Yadin2016}, and
Theorems~\ref{Thm:mean_variance} and~\ref{Thm:CLT} below. We denote expectation and variance with respect to $\PP_{n,z}$ by $\EE_{n,z}$ and $\var_{n,z}$. 

\begin{theorem}
\label{Thm:mean_variance}
Fix $\alpha\in\RR$ and $\beta\geq 0$, and define the sequence $(z_n)_{n\in\NN}$ so that ${1/z_n=1+\alpha~ n^{-\beta}}$. Let $X$ denote a standard normal random variable. Then, as $n\to\infty$
\begin{equation*}
\EE_{n,z_n}(\walklength) 
\sim 
\begin{cases}
 |\alpha|n,  & \beta=0, \alpha\in(-1,0) \\[5pt]
 |\alpha|n^{1-\beta},  & \beta\in(0,1/2), \alpha<0 \\[5pt]
 \left[\EE(X|X>\alpha) - \alpha\right]\sqrt{n},  & \beta=1/2,\alpha\in\RR \\[5pt]
 \sqrt{2/\pi} \,\sqrt{n},  & \beta>1/2,\alpha\in\RR \\[5pt]
 n^{\beta}/\alpha,  & \beta\in(0,1/2), \alpha>0 \\[5pt]
 1/\alpha,  & \beta=0, \alpha>0 \\[5pt]
\end{cases}
\end{equation*}
and
\begin{equation*}
\var_{n,z_n}(\walklength) 
\sim 
\begin{cases}
(1+\alpha)n & \beta=0, \alpha\in(-1,0) \\[5pt]
n & \beta\in(0,1/2), \alpha<0 \\[5pt]
\var(X|X>\alpha)\,n 
& \beta=1/2,\alpha\in\RR \\[5pt]
(1-2/\pi)n & \beta>1/2,\alpha\in\RR \\[5pt]
n^{2\beta}/\alpha^2 & \beta\in(0,1/2), \alpha>0 \\[5pt]
(1+\alpha)/\alpha^2 & \beta=0, \alpha>0 \\[5pt]
\end{cases}
\end{equation*}
 \end{theorem}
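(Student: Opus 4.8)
The plan is to reduce both moments to the asymptotics of a single quantity attached to the incomplete gamma function, and then to read off the six cases according to where the gamma argument sits relative to its transition point. First I would record closed forms for the two moments. Writing $Z_n(z)=\sum_{\omega\in\Omega_n}(z/n)^{\walklength(\omega)}$ for the partition function, the reindexing $k\mapsto n-1-k$ applied to \eqref{eq:size of Omega_k} collapses $Z_n$ into a truncated exponential series and gives $Z_n(z)=a^{-(n-1)}e^{a}\,\Gamma(n,a)$ with $a:=n/z$ — the incomplete-gamma representation of Section~\ref{sec:preliminaries}. Since $\PP_{n,z}$ is an exponential tilt in $\walklength$, one has $\EE_{n,z}(\walklength)=z\,\d\log Z_n(z)/\d z$ and $\var_{n,z}(\walklength)=z\,\d\EE_{n,z}(\walklength)/\d z$; using $z\,\d/\d z=-a\,\d/\d a$ and $\partial_a\Gamma(n,a)=-a^{n-1}e^{-a}$, differentiation yields, with $g:=a^{n}e^{-a}/\Gamma(n,a)$,
\[
  \EE_{n,z}(\walklength)=(n-1)-a+g,
  \qquad
  \var_{n,z}(\walklength)=a-(n-a)\,g-g^{2}.
\]
The entire theorem then becomes a statement about the asymptotics of $g$.

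Next I would substitute $1/z_n=1+\alpha n^{-\beta}$, so that $a=n+\alpha n^{1-\beta}$: the governing quantity is the displacement $a-n=\alpha n^{1-\beta}$ of $a$ from the transition point $n$ of $\Gamma(n,\cdot)$, measured on the natural fluctuation scale $\sqrt n$. A useful cross-check and the source of the case split is the saddle-point picture: $\log\PP_{n,z}(\walklength=k)$ is concave in $k$ with maximiser $\hat k=-\alpha n^{1-\beta}$ and curvature $\approx-1/(n-\hat k)$, so for $\alpha<0$ the law concentrates near $\hat k>0$ with variance $\approx n-\hat k$, whereas for $\alpha\ge 0$ the maximiser is pushed to the boundary $k=0$ and the limiting shape depends on how far below $0$ it lies relative to $\sqrt n$. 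The $\alpha<0$ regimes come out cleanly: when $\alpha<0$ (and $\beta=0,\ \alpha\in(-1,0)$), $a$ lies a distance $\gg\sqrt n$ \emph{below} $n$, so $\Gamma(n,a)\sim\Gamma(n)$ and a short Stirling estimate shows $g$ is (stretched-)exponentially small; then $\EE_{n,z_n}(\walklength)\sim(n-1)-a=|\alpha|n^{1-\beta}$ and $\var_{n,z_n}(\walklength)\sim a$, which is $\sim n$ for $\beta\in(0,1/2)$ and $\sim(1+\alpha)n$ for $\beta=0$, as claimed.

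For the critical regimes $\beta\ge 1/2$, $a$ lies within $O(\sqrt n)$ of $n$ and I would invoke the uniform (Temme) asymptotics of the incomplete gamma function summarised in Section~\ref{sec:preliminaries}: $\Gamma(n,a)/\Gamma(n)\to\phibar(\alpha)$, while the local central limit theorem for the $\mathrm{Gamma}(n,1)$ law gives $a^{n-1}e^{-a}/\Gamma(n)\sim\varphi(\alpha)/\sqrt n$, where $\varphi$ is the standard normal density and $\phibar$ its tail. Hence $g\sim\sqrt n\,\lambda$ with $\lambda:=\varphi(\alpha)/\phibar(\alpha)$, and substituting into the closed forms the $O(n)$ contributions cancel, leaving $\EE_{n,z_n}(\walklength)\sim\sqrt n\,(\lambda-\alpha)$ (the $-\alpha$ coming from the explicit $-a$) and $\var_{n,z_n}(\walklength)\sim n(1+\alpha\lambda-\lambda^{2})$. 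The Mills-ratio identities $\lambda=\EE(X|X>\alpha)$ and $1+\alpha\lambda-\lambda^{2}=\var(X|X>\alpha)$ then give the $\beta=1/2$ line exactly; when $\beta>1/2$ the displacement $a-n=o(\sqrt n)$ is negligible on the fluctuation scale, which is the $\alpha=0$ specialisation $\lambda=\sqrt{2/\pi}$, yielding the half-normal moments $\sqrt{2/\pi}$ and $1-2/\pi$.

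The remaining regimes, $\alpha>0$ with $\beta<1/2$ (and $\beta=0,\ \alpha>0$), are where the real work lies and I expect them to be the main obstacle. Here $a$ lies a distance $\gg\sqrt n$ \emph{above} $n$, so $g$ is comparatively large and the top-order contributions to the two closed forms cancel identically against $(n-1)-a$ and against $a-(n-a)g-g^{2}$; the stated answers are the surviving lower-order remainders, of order $n^{\beta}$ and $n^{2\beta}$ (and $O(1)$ at $\beta=0$). Extracting these from the gamma formula would require carrying the large-argument expansion of $\Gamma(n,a)$ to high order with controlled remainders, which is delicate because the relevant expansion parameter $(n-1)/a$ is not small. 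I would instead sidestep the cancellation by returning to the law directly: the exact one-step ratio $\PP_{n,z}(\walklength=k+1)/\PP_{n,z}(\walklength=k)=z(1-(k+1)/n)$ differs from the geometric ratio $z$ only through the factor $1-(k+1)/n=1+o(1)$, uniformly over the relevant range $k=O(n^{\beta})=o(\sqrt n)$. Comparing $\walklength$ with a geometric law of parameter $1-z_n\sim\alpha n^{-\beta}$, and controlling the tail $k\gg n^{\beta}$ via $\prod_{i\le k}(1-i/n)\le e^{-k^{2}/(2n)}$, then yields $\EE_{n,z_n}(\walklength)\sim z_n/(1-z_n)\sim n^{\beta}/\alpha$ and $\var_{n,z_n}(\walklength)\sim z_n/(1-z_n)^{2}\sim n^{2\beta}/\alpha^{2}$, with the constant-$z$ specialisation $\beta=0$ giving $1/\alpha$ and $(1+\alpha)/\alpha^{2}$.
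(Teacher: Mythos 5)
Your proposal is correct, and while it shares its backbone with the paper, it departs from it exactly where the paper is most technical. Your closed forms $\EE_{n,z}(\walklength)=(n-1)-a+g$ and $\var_{n,z}(\walklength)=a-(n-a)g-g^2$ are identical to the paper's \eqref{eq:mean in terms of H}--\eqref{eq:var in terms of H} with $g=1/H_n(\nu)$, and your treatment of the regimes $\alpha<0$ and $\beta\ge 1/2$ is essentially the paper's route: $Q(n,a)\to 1$ (respectively $Q(n,a)\to\phibar(\alpha)$) from the Gaussian regime of the incomplete gamma function, a Stirling/local-CLT estimate for the density giving $g\sim\sqrt{n}\,\varphi(\alpha)/\phibar(\alpha)$, and the Mills-ratio identities, which are precisely \eqref{eq:conditional normal moments}. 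The genuine divergence is in the two $\alpha>0$ regimes. The paper pushes through the cancellation you identify: it applies Temme's expansion after cancellation (Lemma~\ref{lem:Temme's expansion after cancellation}) with $N=3$ and Tricomi's expansion (Lemma~\ref{lem:Tricomi's expansion}), carrying enough terms of $1/H_n$ --- relative accuracy $o(n^{4\beta-2})$, including the regime-dependent $\ind(\beta\le 1/3)$ and $\ind(\beta\le 1/4)$ terms of Proposition~\ref{prop:H_n asymptotics}\eqref{lemma_part:H_n_high_T_scalingwindow} --- so that the $O(n^{2-2\beta})$ terms in $\nu+(\nu-n)/H_n-1/H_n^2$ cancel and expose the $n^{2\beta}/\alpha^2$ remainder. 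You instead exploit the exact one-step ratio $\PP_{n,z}(\walklength=k+1)/\PP_{n,z}(\walklength=k)=z\bigl(1-(k+1)/n\bigr)$ to compare $\walklength$ directly with a geometric law of parameter $1-z_n$, using $\prod_{i\le k}(1-i/n)\le e^{-k^2/(2n)}$ and a truncation at $n^{\beta}\ll K\ll\sqrt{n}$; this is sound, and it has a real structural advantage: the first and second moments are then estimated directly at their target orders $n^{\beta}$ and $n^{2\beta}$ with only relative $o(1)$ control needed, so the delicate cancellation of $n^{2-2\beta}$-order quantities never arises. What the paper's heavier machinery buys in exchange is a single uniform gamma-asymptotics toolkit with explicit higher-order correction terms for $1/H_n$, which is then reused wholesale in the proof of the limit theorems (Theorem~\ref{Thm:CLT}), whereas your geometric comparison would need to be supplemented separately there (though it does essentially give Case~\eqref{CLT:high temp} of that theorem for free).
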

Note that if we denote the standard normal density and tail distribution by 
\begin{equation}
\varphi(x):=\frac{e^{-x^2/2}}{\sqrt{2\pi}},
\qquad \qquad \qquad 
\bar{\Phi}(x):=\int_{x}^{\infty} \varphi(t) \mathrm{d}t
\label{eq:normal distribution}
\end{equation}
then we have the following explicit forms for the conditional Gaussian moments appearing above
\begin{equation}
\label{eq:conditional normal moments}
\hspace{-10mm}
\EE(X|X>\alpha) = \frac{\varphi(\alpha)}{\bar{\Phi}(\alpha)}, \qquad \quad
\var(X|X>\alpha) =1+\frac{\varphi(\alpha)}{\bar{\Phi}(\alpha)}\left(\alpha-\frac{\varphi(\alpha)}{\bar{\Phi}(\alpha)}\right)
\end{equation}

It is also possible to obtain central limit theorems for $\sL_n$, in the various fugacity regimes. We have:
\begin{theorem}
\label{Thm:CLT}
Fix $\alpha\in\RR$ and $\beta\geq 0$, and define the sequence $(z_n)_{n\in\NN}$ so that ${1/z_n=1+\alpha~ n^{-\beta}}$. 
Let $X$ denote a standard normal random variable, $Y$ an exponential random variable with mean 1, and, if $\alpha>0$,
let $W_{\alpha}$ denote a geometric random variable with mean $1+1/\alpha$. As $n\to \infty$
\begin{enumerate}
    \item\label{CLT:low temp} If $\beta=0$ and $\alpha\in(-1,0)$, then 
$$
\frac{\sL_n - |\alpha|n}{\sqrt{(1+\alpha)n}} \Longrightarrow X
$$
\item\label{CLT:low temp window} If $\beta\in(0,1/2)$ and $\alpha<0$, then 
$$
\frac{\sL_n - |\alpha|n^{1-\beta}}{\sqrt{n}} \Longrightarrow X
$$
    \item\label{CLT:boundary} If $\beta=1/2$ and $\alpha\in\RR$, then
$$
        \frac{\sL_n}{\sqrt{n}} + \alpha \Longrightarrow X \,|\, X > \alpha
$$
    \item\label{CLT:critical window} If $\beta >1/2$ and $\alpha\in\RR$, then
$$
  \frac{\sL_n}{\sqrt{n}}\Longrightarrow X \,|\, X > 0
$$
    \item\label{CLT:high temp window} If $\beta\in(0,1/2)$ and $\alpha>0$ then,
$$
\frac{\walklength}{n^\beta/\alpha} \Longrightarrow Y
$$
    \item\label{CLT:high temp} If $\beta = 0$ and $\alpha >0$, then
$$
\sL_n + 1 \Longrightarrow W_{\alpha}
$$
\end{enumerate}
\end{theorem}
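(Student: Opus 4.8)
The plan is to exploit a clean distributional identity hidden in the incomplete-gamma expressions of Section~\ref{sec:preliminaries}. Writing $M := n-1-\walklength$ for the ``reversed'' length and substituting $m=n-1-k$ in \eqref{eq:size of Omega_k}, one finds that $\PP_{n,z_n}(M=m)\propto (n/z_n)^m/m!$ on $\{0,1,\ldots,n-1\}$; that is, $M$ is a Poisson random variable of mean $\mu_n := n/z_n = n+\alpha\,n^{1-\beta}$, conditioned to lie in $\{0,\ldots,n-1\}$. The partial sums of this Poisson law are precisely the regularised incomplete gamma functions recorded in Section~\ref{sec:preliminaries}, so the distribution function of $\walklength$ is available in closed form and its asymptotics are controlled by the behaviour of $\Gamma(n,\mu_n)$. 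All six regimes are then organised by the position of the truncation point $n-1$ relative to $\mu_n$, i.e.\ by the sign of $\alpha$ and the size of $\beta$: the signed gap $n-1-\mu_n = -\alpha\,n^{1-\beta}-1$ has magnitude of order $n^{1-\beta}$, which is large compared with the Poisson standard deviation $\sqrt{\mu_n}\sim\sqrt n$ iff $\beta<1/2$, comparable iff $\beta=1/2$, and small iff $\beta>1/2$.

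For the Gaussian regimes (cases~\ref{CLT:low temp}--\ref{CLT:critical window}) I would invoke the central limit theorem for the untruncated Poisson law, $(M-\mu_n)/\sqrt{\mu_n}\Rightarrow X$, together with $\walklength=n-1-M$ and $\sqrt{\mu_n}\sim$ the stated normalisation. When $\alpha<0$ and $\beta<1/2$ (cases~\ref{CLT:low temp}--\ref{CLT:low temp window}) the mean $\mu_n$ sits more than $C\sqrt n$ below $n-1$, so the event $\{M\le n-1\}$ has probability $1-o(1)$ and may be discarded; reversing the sign and absorbing the harmless $-1$ shift into the $\sqrt n$-scaling yields the stated normal limits, with variance $\mu_n\sim(1+\alpha)n$ (case~\ref{CLT:low temp}) or $\mu_n\sim n$ (case~\ref{CLT:low temp window}). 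When $\beta=1/2$ (case~\ref{CLT:boundary}) the truncation point lies a finite number of standard deviations from $\mu_n$: here $\walklength/\sqrt n+\alpha\Rightarrow X$ for the untruncated variable, while $\{M\le n-1\}=\{\walklength\ge0\}$ becomes $\{X>\alpha\}$ in the limit, so a conditional CLT (convergence in distribution together with convergence of the conditioning probability $\bar{\Phi}(\alpha)>0$) gives the law $X\mid X>\alpha$. Case~\ref{CLT:critical window} is the same argument with $\alpha$ replaced by $0$, since $\mu_n=n+o(\sqrt n)$.

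For the exponential and geometric regimes (cases~\ref{CLT:high temp window}--\ref{CLT:high temp}) the mean exceeds the truncation point, $\mu_n>n-1$, so the conditioned Poisson piles up at the boundary $M=n-1$ and $\walklength$ is small. The natural tool is the ratio of successive probabilities, which telescopes: $\PP_{n,z_n}(\walklength=\ell)/\PP_{n,z_n}(\walklength=\ell-1)=(n-\ell)/\mu_n$. When $\beta=0$ (case~\ref{CLT:high temp}) this ratio converges to $1/(1+\alpha)$ for each fixed $\ell$, so $\walklength$ converges to a geometric law with that ratio, i.e.\ $\walklength+1\Rightarrow W_\alpha$ with mean $1+1/\alpha$. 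When $\beta\in(0,1/2)$ (case~\ref{CLT:high temp window}) the ratio equals $1-\alpha\,n^{-\beta}+O(n^{-2\beta}+\ell/n)$ uniformly over the window $\ell=\Theta(n^\beta)$; thus $\walklength$ is approximately geometric with vanishing parameter $\alpha\,n^{-\beta}$, and after rescaling by $n^\beta/\alpha$ the product $\prod_{i=1}^{\ell}(n-i)/\mu_n$ converges to $e^{-t}$ at $\ell\approx t\,n^\beta/\alpha$, giving the exponential limit $Y$.

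I expect the main obstacle to be the uniform error control in case~\ref{CLT:high temp window}. One must show that the accumulated logarithm $\sum_{i=1}^{\ell}\log\frac{n-i}{\mu_n}=-\alpha\,n^{-\beta}\ell-\frac{\ell^2}{2n}+\cdots$ is genuinely dominated by its linear term over the full range $\ell=\Theta(n^\beta)$ --- the quadratic term is $O(n^{2\beta-1})=o(1)$ precisely because $\beta<1/2$, which is also what makes $\beta=1/2$ the crossover to the Gaussian regime --- and that the contribution of the tail $\ell\gg n^\beta$ is negligible. Both tasks, as well as the transition-region estimates underlying cases~\ref{CLT:boundary}--\ref{CLT:critical window}, are handled most cleanly through the uniform asymptotics of the incomplete gamma function summarised in Section~\ref{sec:preliminaries}, so the bulk of the work is converting those analytic estimates into the stated convergence-in-distribution statements via the continuity theorem for distribution functions.
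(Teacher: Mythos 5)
Your proposal is correct in substance, but it takes a genuinely different route from the paper. The paper works analytically throughout: it fixes $y$, defines the standardised sequence $\kappa_n(y)$, and evaluates both incomplete gamma functions in the exact tail formula $\PP_{n,z}(\sL_n>x)=\ind(x<0)+\ind(x\ge0)\,Q(n-\floor{x}-1,\nu)/Q(n,\nu)$ along that sequence — Temme's uniform expansion (Lemma~\ref{lem:Temme's expansion}) for Cases~\eqref{CLT:low temp}--\eqref{CLT:critical window}, the cancellation expansion (Lemma~\ref{lem:Temme's expansion after cancellation}) with $N=1$ plus an explicit computation of $n\eta^2(\lambda_n)-n\mu_n\eta^2(\lambda_n/\mu_n)$ for Case~\eqref{CLT:high temp window}, and Tricomi's expansion with Stirling for Case~\eqref{CLT:high temp}. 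You instead exploit the conditioned-Poisson representation probabilistically: the classical Poisson CLT for $M=n-1-\sL_n$ together with treatment of the truncation event $\{M\le n-1\}$ in the four Gaussian regimes, and the telescoping ratio $\PP(\sL_n=\ell)/\PP(\sL_n=\ell-1)=(n-\ell)/\nu$ in the exponential and geometric regimes. Your route is more elementary and self-contained — in particular it bypasses the delicate uniform expansions entirely, which is exactly where the paper invests its technical effort — whereas the paper's machinery pays for itself by being shared with the proof of Theorem~\ref{Thm:mean_variance}, where the higher-order correction terms of Lemmas~\ref{lem:Tricomi's expansion} and~\ref{lem:Temme's expansion after cancellation} are genuinely needed. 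Two points deserve tightening. First, in Cases~\eqref{CLT:boundary}--\eqref{CLT:critical window}, convergence in distribution plus convergence of the conditioning probability does \emph{not} in general give a conditional limit for arbitrary events; it is legitimate here precisely because $\{M\le n-1\}$ is a half-line event of the conditioned variable itself, so the conditional distribution function is a ratio of values of the distribution function of $(M-\nu)/\sqrt{\nu}$ evaluated at $y$ and at the moving point $c_n=(n-1-\nu)/\sqrt{\nu}\to-\alpha$, and the limit $\bar{\Phi}(y)/\bar{\Phi}(\alpha)$ follows from the (uniform, by P\'olya's theorem) convergence to the normal distribution function — this should be spelled out. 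Second, your uniform-error concern in Case~\eqref{CLT:high temp window} is resolvable within your elementary framework: the quadratic correction is $O(\ell^2/n)=O(n^{2\beta-1}\log^2 n)=o(1)$ uniformly for $\ell=O(n^{\beta}\log n)$, and the uniform geometric bound $(n-\ell)/\nu\le(1+\alpha n^{-\beta})^{-1}$ renders the range $\ell\gg n^{\beta}\log n$ negligible against the total mass $\sim n^{\beta}/\alpha$, so the Riemann-sum limit $e^{-t}$ holds; your closing suggestion to fall back on the incomplete-gamma asymptotics of Section~\ref{sec:preliminaries} is therefore unnecessary, and carrying out this estimate directly is what would make your proof a genuinely independent alternative rather than a repackaging of the paper's argument.
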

\begin{remark}
The notation used in Case~\eqref{CLT:boundary} of Theorem~\ref{Thm:CLT} is an abbreviation for the following statement:
$$
\lim_{n\to\infty} \PP_{n,z_n}\left(\frac{\sL_n}{\sqrt{n}}+\alpha \le  y \right) = \PP(X\le y | X>\alpha), \qquad \text{ for all } y\in\RR,
$$
and Case~\eqref{CLT:critical window} is to be interpreted analogously. In addition, since the distribution of $X$ conditioned on $X>0$
equals the distribution of $|X|$, Case~\eqref{CLT:critical window} can equivalently be interpreted as stating that the law of
$\sL_n/\sqrt{n}$ converges weakly to a standard half-normal distribution.
\end{remark}

The above asymptotics of the mean in the case $\alpha,\beta>0$ were announced previously in~\cite{ZhouGrimmFangDengGaroni2018}.  The main
technical tool used in the proof of Theorems~\ref{Thm:mean_variance} and~\ref{Thm:CLT} is a uniform asymptotic expansion of the incomplete
gamma function~\cite{Temme1979}. During the final stages of preparation of this article, we became aware of the very recent manuscript~\cite{Slade2019},
which also studies SAW on the complete graph. The asymptotic expressions for the mean given in Theorem~\ref{Thm:mean_variance} are also presented in~\cite{Slade2019}. In addition, the limit theorems given in Theorem~\ref{Thm:CLT} in Cases~\eqref{CLT:high temp window} and~\eqref{CLT:high temp} agree with Eqs.~(1.27) and~(1.26) from~\cite{Slade2019}. In Case~\eqref{CLT:boundary}, our limiting distribution expressed in terms of a conditioned normal variable appears superficially different to \cite[Eq.~(1.28)]{Slade2019}, which is characterised in terms of a moment generating function,
however it can be easily verified that the results are equivalent. In Cases~\eqref{CLT:low temp} and~\eqref{CLT:low temp window}, our standardisation of $\sL_n$ yields non-degenerate limits which are not equivalent to the limits presented in Eqs.~(1.30) and~(1.29) of~\cite{Slade2019}.

\section{Preliminaries}
\label{sec:preliminaries}
In this section we show how to express the mean, variance and tail distribution of $\sL_n$ in terms of incomplete gamma functions, and then
summarise the known asymptotic behaviour of the latter. 

Let $\nu:=n/z$. As observed in~\cite{Yadin2016}, and as can be easily verified directly from~\eqref{eq:size of Omega_k} and~\eqref{eq:SAW measure}, the random variable $n-1-\sL_n$ has the distribution of a rate-$\nu$ Poisson random variable conditioned on being less than $n$. It follows that, for all integer $0\le k \le n-1$, we have
\begin{equation}
    \PP_{n,z}(\sL_n  = n - 1 - k) = \frac{e^{-\nu}\, \nu^{k}}{k! \,Q(n,\nu)}
    \label{eq:mass function identity}
\end{equation}
where $Q(n,\nu)$ denotes the regularised incomplete gamma function~\cite{Temme1996}. The function $Q(n,\nu)$ can be defined for all real $n,\nu>0$ by
\begin{equation}
    Q(n,\nu) := \frac{1}{\Gamma(n)}\int_{\nu}^{\infty} s^{n-1}\,e^{-s}\,\d s,
    \label{eq:Q as integral}
\end{equation}
and in obtaining~\eqref{eq:mass function identity} we have used the fact that, for any positive integer $n$, integration by parts yields
\begin{equation}
\label{eq:Q as a sum}
    e^{-\nu} \sum_{j=0}^{n-1}\frac{\nu^j}{j!} = Q(n,\nu)
\end{equation}
Now, for any $x\in\RR$
$$
\PP_{n,z}(\sL_n > x) = \ind(x<0) + \ind(x\geq 0) \sum_{k=0}^{n-\floor{x}-2}\PP_{n,z}(\sL_n = n-1-k)
$$
and so applying~\eqref{eq:mass function identity} and~\eqref{eq:Q as a sum} then yields the following expression for the tail distribution
\begin{equation}
\label{eq:tail_distribution}
\PP_{n,z}(\sL_n > x) =  \ind(x<0) + \ind(x\geq 0) \frac{Q(n-\floor{x}-1,\nu)}{Q(n,\nu)}
\end{equation}

Similarly, from~\eqref{eq:mass function identity} and~\eqref{eq:Q as a sum} we can see that for any $m\in\NN$, the $m$th moment of $\sL_n$ is
\begin{eqnarray}
\EE_{n,z}(\sL_n^m) &= \sum_{k=0}^{n-1} (n-1-k)^m \PP_{n,z}(\sL_n=n-1-k) \nonumber \\
&= \sum_{l=0}^m (-1)^l {m \choose l}(n-1)^{m-l} \frac{e^{-\nu}}{Q(n,\nu)}\sum_{k=0}^{n-1}k^l\frac{\nu^k}{k!} \nonumber \\
&= \sum_{l=0}^m (-1)^l {m \choose l}(n-1)^{m-l} \frac{e^{-\nu}}{Q(n,\nu)} \left(\nu \frac{\d}{\d \nu}\right)^l 
e^{\nu}Q(n,\nu)
\label{eq:mth moment}
\end{eqnarray}
The differentiations required in~\eqref{eq:mth moment} are easily performed using~\eqref{eq:Q as integral}, and one obtains
\begin{eqnarray}
\label{Mean and Variance}
\EE_{n,z}(\sL_n) &= n-1-\nu +\frac{1}{H_n(\nu)} \label{eq:mean in terms of H}\\
\var_{n,z}(\sL_n) &= \nu + \frac{(\nu-n)}{H_n(\nu)} - \frac{1}{H_n^2(\nu)}\label{eq:var in terms of H}
\end{eqnarray}
where
\begin{eqnarray}
\label{Def:H_n}
H_n(\nu):=\frac{\Gamma(n)\,Q(n,\nu)}{\nu^n\,e^{-\nu}}
\end{eqnarray}

It will be useful in what follows to observe that if $\nu=n \lambda_n$ for some sequence $(\lambda_n)_{n\in\NN}$, then $H_n(n\lambda_n)$ can be written
\begin{equation}
\label{eq:H in terms of eta}
H_n(n\lambda_n) = \sqrt{\frac{2\pi}{n}} \exp\left(n \,\eta^2(\lambda_n)/2\right)\,\Gamma^*(n)\,Q(n,n\lambda_n)
\end{equation}
where 
\begin{equation}
\label{eq:definition of gamma star}
\Gamma^*(n):=\sqrt{\frac{n}{2\pi}}\,e^n\,n^{-n}\,\Gamma(n)    
\end{equation}
and $\eta:(0,\infty)\to\RR$ is defined by
\begin{equation}
\label{eq:eta definition}
\eta(\lambda) = (\lambda - 1)\sqrt{2\frac{\lambda-1-\log \lambda}{(\lambda-1)^2}}
\end{equation}
We also note, for future reference, that
\begin{equation}
    \label{eq:eta Taylor expansion near 1}
\eta(\lambda) = (\lambda-1) + O(\lambda-1)^2, \qquad \lambda \to 1
\end{equation}

\subsection{Gamma function asymptotics}
We collect here some known asymptotic results, which will be used in our proofs of Theorems~\ref{Thm:mean_variance} and~\ref{Thm:CLT}. To begin, we note that Stirling's formula (e.g.~\cite[Eq.(6.1.15)]{Temme2015}) implies that as $n\to\infty$
\begin{equation}
\label{eq:Stirling's formula}
    \Gamma^*(n) = 1 + \frac{1}{12n} + O(n^{-2})
\end{equation}

We next state three useful expansions for $Q$. The first follows from~\cite[Eqs.(7.4.41) and (7.4.43)]{Temme2015}, which dates back to Tricomi~\cite{Tricomi1950}.
\begin{lemma}[Tricomi]
\label{lem:Tricomi's expansion}
Let $\lambda>1$, and let $(\theta_n)_{n\in\NN}$ satisfy $\theta_n \rightarrow 1$ as $n\rightarrow \infty$. Then, 
for any $N\in\NN$, as $n\to\infty$
$$
\Gamma(n\theta_n)\,Q(n\theta_n,n\lambda) = (n\lambda)^{n\theta_n-1} e^{-n\lambda}
\left[\sum_{k=0}^{N-1}
b_k(\theta_n/\lambda)(\lambda\,n)^{-k}
+ O(n^{-N})\right]
$$
where the first few coefficients are 
$$
b_0(x) = \frac{1}{1-x}, \qquad b_1(x) = \frac{-1}{(1-x)^3}, \qquad b_2(x) = \frac{x+2}{(1-x)^5}
$$
\end{lemma}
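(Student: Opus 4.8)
Since $\Gamma(a)\,Q(a,x) = \int_x^\infty s^{a-1}e^{-s}\,\d s$ by~\eqref{eq:Q as integral}, with $a = n\theta_n$ and $x = n\lambda$, the plan is to read off the expansion directly from this integral representation. Because $\lambda > 1$ while $\theta_n \to 1$, we have $x > a$ for all large $n$, so we are in the regime where the lower endpoint exceeds the parameter and $Q$ is exponentially small. First I would substitute $s = x+t$ and pull out the prefactor, writing
$$
\Gamma(n\theta_n)\,Q(n\theta_n,n\lambda) = (n\lambda)^{n\theta_n-1}e^{-n\lambda}\int_0^\infty \left(1+\frac{t}{x}\right)^{a-1}e^{-t}\,\d t.
$$
Setting $\mu := a/x = \theta_n/\lambda$, the factor $(1+t/x)^{a-1}$ can be written as $\exp\!\bigl(a\log(1+t/x)\bigr)(1+t/x)^{-1}$, and expanding the logarithm gives $a\log(1+t/x) = \mu t - \mu t^2/(2x) + O(t^3/x^2)$. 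The leading exponential factor is therefore $e^{-(1-\mu)t}$, which is integrable on $(0,\infty)$ precisely because $\mu = \theta_n/\lambda \to 1/\lambda < 1$; this is the structural reason the expansion exists and is why the hypothesis $\lambda>1$ is essential.

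Next I would develop the remaining factors as a power series in $1/x$ with polynomial coefficients in $t$,
$$
\left(1+\frac{t}{x}\right)^{a-1}e^{-t} = e^{-(1-\mu)t}\left[1 + \frac{c_1(t)}{x} + \cdots + \frac{c_{N-1}(t)}{x^{N-1}} + R_N(t,x)\right],
$$
and integrate term by term using $\int_0^\infty t^m e^{-(1-\mu)t}\,\d t = m!\,(1-\mu)^{-(m+1)}$. The zeroth-order term yields $(1-\mu)^{-1} = b_0(\mu)$, and a short computation with $c_1(t) = -t - \mu t^2/2$ gives the first correction $-(1-\mu)^{-2} - \mu(1-\mu)^{-3} = -(1-\mu)^{-3} = b_1(\mu)$, confirming that the coefficients $b_k(\theta_n/\lambda)$ and the powers $(\lambda n)^{-k}$ emerge in exactly the stated form; the higher $b_k$ follow from the same bookkeeping. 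Since $\theta_n/\lambda \to 1/\lambda$, each $b_k(\theta_n/\lambda)$ stays bounded in $n$, so the claim is genuinely an expansion in powers of $1/n$ with slowly varying, bounded coefficients.

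The main obstacle is controlling the remainder $R_N(t,x)$ uniformly so as to obtain the claimed $O(n^{-N})$ error. The Taylor expansion of $\log(1+t/x)$ is valid only for $t/x$ bounded, whereas $t$ ranges over all of $(0,\infty)$, so I would split the integral at $t = \delta x$ for a fixed small $\delta > 0$. On $\{t \le \delta x\}$, Lagrange-form bounds on the truncated series give $R_N(t,x) = O(x^{-N})$ against a factor absorbed by the decay $e^{-(1-\mu)t}$, uniformly in $n$ because $\mu$ is bounded away from $1$. On $\{t > \delta x\}$ one estimates $(1+t/x)^{a-1}e^{-t}$ by an exponentially small integrand to show the tail contributes $O(e^{-cn})$ for some $c>0$, which is negligible beside any power of $1/n$. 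Combining the two regions yields the uniform $O(n^{-N})$ remainder and completes the argument. I note that this is the classical result of Tricomi, so the expansion may alternatively simply be quoted from~\cite[Eqs.(7.4.41) and (7.4.43)]{Temme2015}.
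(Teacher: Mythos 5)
Your proposal is correct, but it takes a different route from the paper: the paper does not prove this lemma at all, instead quoting it directly from \cite[Eqs.(7.4.41) and (7.4.43)]{Temme2015} (attributed to Tricomi~\cite{Tricomi1950}), whereas you reconstruct the expansion from scratch via Watson's-lemma-type analysis of the shifted integral $\int_0^\infty (1+t/x)^{a-1}e^{-t}\,\d t$ with $a=n\theta_n$, $x=n\lambda$. Your derivation is sound: the identification of $e^{-(1-\mu)t}$ as the dominant factor, the observation that integrability hinges on $\mu=\theta_n/\lambda\to 1/\lambda<1$ (hence on $\lambda>1$), the computation of $c_1(t)=-t-\mu t^2/2$ recovering $b_1(\mu)=-(1-\mu)^{-3}$, and the splitting at $t=\delta x$ with an exponentially small tail (using $\log(1+u)\le u$, the integrand on $t>\delta x$ is at most $e^{-(1-\mu)t}$, giving an $O(e^{-cn})$ contribution) are all correct; the same bookkeeping at second order indeed yields $t^2+\tfrac{5\mu t^3}{6}+\tfrac{\mu^2 t^4}{8}$ and hence $b_2(\mu)=(2+\mu)/(1-\mu)^5$, matching the stated coefficient, though you did not carry this out explicitly. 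What your approach buys over the paper's citation is transparency on a point the paper glosses: the cited formulas are stated for a fixed parameter, while the lemma is applied with the $n$-dependent first argument $n\theta_n$, so strictly one must check uniformity as $\theta_n\to1$; your proof supplies this automatically, since $\mu_n=\theta_n/\lambda$ remains in a compact subset of $[0,1)$, so $1-\mu_n$ is bounded away from zero and all constants in the remainder estimates are uniform in $n$. The cost is length, which is why the paper (and your closing remark) defers to the reference.
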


The following two expansions, which are the main ingredients in the proofs of Theorem~\ref{Thm:mean_variance} and~\ref{Thm:CLT}, are consequences of a uniform asymptotic expansion due to Temme~\cite{Temme1979}. The functions $\bar{\Phi}$ and $\eta$ appearing below are as defined above in~\eqref{eq:normal distribution} and~\eqref{eq:eta definition}, respectively.
\begin{lemma}[Temme]
\label{lem:Temme's expansion}
Let $(\theta_n)_{n\in\NN}$ and $(\lambda_n)_{n\in\NN}$ denote convergent real sequences with positive limits. Then, as $n\to\infty$
$$
Q(n\theta_n,n\lambda_n) = \bar{\Phi}(\sqrt{n\theta_n}\eta(\lambda_n/\theta_n)) + O\left(\frac{e^{-n\theta_n\eta^2(\lambda_n/\theta_n)/2}}{\sqrt{ n}}\right)
$$
\end{lemma}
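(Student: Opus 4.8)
The plan is to read the result off directly from Temme's uniform asymptotic expansion of the regularised incomplete gamma function~\cite{Temme1979}, specialised to the parameters $a=n\theta_n$ and $x=n\lambda_n$ (both of which are positive reals for large $n$, so that $Q(a,x)$ is well defined via~\eqref{eq:Q as integral}). Temme's expansion asserts that, as $a\to\infty$,
\[
Q(a,x) = \frac12\,\mathrm{erfc}\!\left(\zeta\sqrt{a/2}\right) + \frac{e^{-a\zeta^2/2}}{\sqrt{2\pi a}}\left[c_0(\zeta) + O(a^{-1})\right],
\]
uniformly for $\lambda:=x/a$ in compact subsets of $(0,\infty)$, where $\zeta$ is determined by $\frac12\zeta^2 = \lambda-1-\log\lambda$ together with the sign condition $\mathrm{sign}(\zeta)=\mathrm{sign}(\lambda-1)$, and where the coefficient $c_0$ extends analytically through the transition point $\zeta=0$ (equivalently $\lambda=1$), and is therefore bounded on compact $\zeta$-intervals.

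First I would verify that, with $\lambda=\lambda_n/\theta_n$, the quantity $\zeta$ coincides with $\eta(\lambda_n/\theta_n)$ as defined in~\eqref{eq:eta definition}. Indeed, squaring~\eqref{eq:eta definition} gives $\eta^2(\lambda)=2(\lambda-1-\log\lambda)$, matching Temme's defining relation, while the prefactor $(\lambda-1)$ in~\eqref{eq:eta definition} enforces $\mathrm{sign}(\eta)=\mathrm{sign}(\lambda-1)$, matching the sign convention for $\zeta$; hence $\zeta=\eta(\lambda_n/\theta_n)$ exactly. Next, using the elementary identity $\frac12\,\mathrm{erfc}(y/\sqrt2)=\bar{\Phi}(y)$ (immediate from~\eqref{eq:normal distribution} after the substitution $t=s\sqrt2$) together with $a=n\theta_n$, the leading term becomes
\[
\frac12\,\mathrm{erfc}\!\left(\eta(\lambda_n/\theta_n)\sqrt{n\theta_n/2}\right)=\bar{\Phi}\!\left(\sqrt{n\theta_n}\,\eta(\lambda_n/\theta_n)\right),
\]
which is precisely the claimed main term.

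It then remains only to control the remainder, and this is where the hypothesis on the sequences is used. Since $(\theta_n)$ and $(\lambda_n)$ converge to positive limits, the ratio $\lambda_n/\theta_n$ converges to a positive limit and so lies in a fixed compact subinterval of $(0,\infty)$ for all large $n$. Consequently $\eta(\lambda_n/\theta_n)$ ranges over a compact set, on which $c_0$ is bounded, so the bracketed factor is $O(1)$; moreover $\theta_n$ is bounded away from $0$ and $\infty$, so $\sqrt{2\pi a}=\sqrt{2\pi n\theta_n}$ is of order $\sqrt{n}$. The remainder therefore reduces to
\[
\frac{e^{-n\theta_n\eta^2(\lambda_n/\theta_n)/2}}{\sqrt{2\pi n\theta_n}}\,O(1)=O\!\left(\frac{e^{-n\theta_n\eta^2(\lambda_n/\theta_n)/2}}{\sqrt{n}}\right),
\]
as required; note that the $k=0$ correction term alone is already of this exact order, while all higher terms are smaller by a factor $O(a^{-1})$.

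The one point requiring genuine care is that $\lambda=\lambda_n/\theta_n$ depends on $n$, so evaluating a Poincar\'e-type expansion at a \emph{fixed} $\lambda$ would not suffice. The essential input is the \emph{uniformity} of Temme's expansion in $\lambda$ over compact subsets of $(0,\infty)$ --- in particular uniformly across the transition point $\lambda=1$, which is exactly why the uniform expansion, rather than the classical Poincar\'e-type asymptotics, must be invoked. Granting this uniformity, together with the boundedness of $c_0$ through $\lambda=1$, makes the remainder estimate immediate; extracting these two facts cleanly from~\cite{Temme1979} is thus the main, and essentially the only, obstacle in the argument.
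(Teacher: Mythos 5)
Your proposal is correct and follows essentially the same route as the paper, which states this lemma as a direct consequence of Temme's uniform expansion (the same expansion, from Eqs.~(1.4), (1.5) and (3.3) of~\cite{Temme1979}, is displayed in the paper's proof of Lemma~\ref{lem:Temme's expansion after cancellation}, where the identification $\zeta=\eta$ and the $\mathrm{erfc}$--$\bar{\Phi}$ conversion are implicit). Your write-up simply makes explicit the two points the paper leaves to the reference --- uniformity of the expansion in $\lambda$ across the transition point $\lambda=1$ and boundedness of $c_0$ there --- and both are correctly identified and correctly used.
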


Since $\bar{\Phi}(x)\to0$ as $x\to\infty$, if the sequences $\theta_n$ and $\lambda_n$ are such that $\sqrt{n}\eta(\lambda_n/\theta_n) \rightarrow \infty$, then we require a stronger result than Lemma~\ref{lem:Temme's expansion}. The following is also a consequence of Temme's uniform asymptotic expansion~\cite{Temme1979}.
\begin{lemma}[Temme]
\label{lem:Temme's expansion after cancellation}
Let $(\theta_n)_{n\in\NN}$ and $(\lambda_n)_{n\in\NN}$ denote real sequences, each converging to 1, 
let $\eta_n:=\eta(\lambda_n/\theta_n)$ and $\xi_n := \frac{\lambda_n}{\theta_n} - 1$, and suppose
$\lim\limits_{n\to\infty}\sqrt{n} \eta_n = +\infty$. Then, as $n\to\infty$
$$
\sqrt{2\pi n\theta_n}e^{ n\theta_n \eta_n^2/2}Q(n\theta_n,n\lambda_n)
=
\sum_{k=0}^{N-1}(-1)^k
\frac{q_k(\xi_n)}{(n\theta_n)^k\xi^{2k+1}_n} + O\left(\frac{1}{\eta_n^{2N+1}\,n^N}\right)
$$
where the $q_k(x)$ are polynomials of degree $2k$,
$$
q_k(x):=\sum_{l=0}^{2k} a_{k,l}\, x^l
$$
and $a_{0,0} = a_{1,0} = a_{1,1} = 1\;, \ a_{1,2} = 1/12\;, \ a_{2,0} = 3$.
\end{lemma}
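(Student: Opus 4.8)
The plan is to read off the expansion from Temme's uniform asymptotic representation of $Q$~\cite{Temme1979}, which already underlies Lemma~\ref{lem:Temme's expansion}, but now extracted in the regime where the Gaussian tail $\phibar(\sqrt{n\theta_n}\,\eta_n)$ is itself exponentially small. That representation has the form $Q(a,a\mu)=\phibar\bigl(\sqrt{a}\,\eta(\mu)\bigr)+\frac{e^{-a\eta(\mu)^2/2}}{\sqrt{2\pi a}}\,S_a(\mu)$, where, uniformly for $\mu$ in a fixed neighbourhood of $1$, one has $S_a(\mu)=\sum_{k=0}^{N-1}C_k(\eta(\mu))\,a^{-k}+O(a^{-N})$ with coefficients $C_k$ that are \emph{analytic} at $\eta=0$ (this regularity across the transition $\mu=1$ is the defining feature of the uniform expansion). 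Setting $a=n\theta_n$ and $\mu=\lambda_n/\theta_n$, so that $\eta(\mu)=\eta_n$, and multiplying through by $\sqrt{2\pi n\theta_n}\,e^{n\theta_n\eta_n^2/2}$, the left-hand side of the lemma becomes
$$\sqrt{2\pi n\theta_n}\,e^{n\theta_n\eta_n^2/2}\,Q(n\theta_n,n\lambda_n)=\sqrt{n\theta_n}\,R\bigl(\sqrt{n\theta_n}\,\eta_n\bigr)+S_{n\theta_n}(\lambda_n/\theta_n),$$
where $R(x):=\sqrt{2\pi}\,e^{x^2/2}\,\phibar(x)$ is the Gaussian Mills ratio.

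First I would expand $R$. Since $\lambda_n/\theta_n\to1$ forces $\eta_n\to0$ while the hypothesis $\sqrt{n}\,\eta_n\to+\infty$ gives $\sqrt{n\theta_n}\,\eta_n\to+\infty$, the classical expansion $R(x)=\sum_{k=0}^{N-1}(-1)^k(2k-1)!!\,x^{-(2k+1)}+O(x^{-(2N+1)})$ applies, with remainder bounded by the first omitted term. This yields
$$\sqrt{n\theta_n}\,R\bigl(\sqrt{n\theta_n}\,\eta_n\bigr)=\sum_{k=0}^{N-1}(-1)^k(2k-1)!!\,\frac{1}{\eta_n^{2k+1}(n\theta_n)^{k}}+O\!\left(\frac{1}{\eta_n^{2N+1}(n\theta_n)^{N}}\right).$$
Combining with the expansion of $S_{n\theta_n}$ and noting that its $O((n\theta_n)^{-N})$ remainder is dominated by the displayed error because $\eta_n\to0$, I obtain an expansion in powers of $(n\theta_n)^{-1}$ whose $k$-th coefficient is $D_k(\eta_n):=(-1)^k(2k-1)!!\,\eta_n^{-(2k+1)}+C_k(\eta_n)$, with the error $O(\eta_n^{-(2N+1)}n^{-N})$ claimed in the lemma (here $n\theta_n\asymp n$).

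The crux is to show that each $D_k$, viewed through the change of variables $\eta=\eta(1+\xi)$ from~\eqref{eq:eta definition}, equals $(-1)^k q_k(\xi)/\xi^{2k+1}$ with $q_k$ a polynomial of degree $2k$, and to pin down the low-order coefficients. Here I would exploit that the \emph{same} quantity admits Tricomi's expansion (Lemma~\ref{lem:Tricomi's expansion}): for fixed parameters with $\lambda>\theta$, substituting the definition~\eqref{eq:definition of gamma star} of $\Gamma^*$ for $\Gamma(n\theta_n)$ and using $\tfrac12\eta_n^2=\lambda_n/\theta_n-1-\log(\lambda_n/\theta_n)$ makes all exponential prefactors cancel identically, rewriting the left-hand side as
$$\sqrt{2\pi n\theta_n}\,e^{n\theta_n\eta_n^2/2}\,Q(n\theta_n,n\lambda_n)=\frac{\theta_n}{\lambda_n\,\Gamma^*(n\theta_n)}\left[\sum_{k=0}^{N-1}b_k(\theta_n/\lambda_n)\,(\lambda_n n)^{-k}+O(n^{-N})\right].$$
Since $1-\theta_n/\lambda_n=\xi_n/(1+\xi_n)$ and $\lambda_n-\theta_n=\theta_n\xi_n$, each $b_k$ becomes a rational function of $\xi_n$ with denominator $\xi_n^{2k+1}$; collecting powers of $(n\theta_n)^{-1}$ and inserting $1/\Gamma^*(n\theta_n)=1-\frac{1}{12\,n\theta_n}+O(n^{-2})$ from~\eqref{eq:Stirling's formula}, the separate $\theta_n,\lambda_n$ dependence drops out, leaving coefficients that are functions of $\xi_n$ alone. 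By uniqueness of asymptotic coefficients this identifies $D_k(\eta)=(-1)^k q_k(\xi)/\xi^{2k+1}$ as an identity of rational functions for $\xi>0$, which then extends by analytic continuation to a punctured neighbourhood of $\xi=0$, covering the regime $\xi_n\to0$. A short computation recovers $q_0(\xi)=1$, and, from the $b_1$ term together with the $-\frac{1}{12\,n\theta_n}$ Stirling correction acting on the $k=0$ term, $q_1(\xi)=1+\xi+\frac{1}{12}\xi^2$, giving $a_{0,0}=a_{1,0}=a_{1,1}=1$ and $a_{1,2}=1/12$.

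The main obstacle is making this coefficient identification rigorous in the transition regime. Tricomi's expansion is valid only for $\lambda$ bounded away from $1$, precisely where the $b_k$ stay finite, so it cannot be applied directly as $\lambda_n\to1$; the rigorous expansion and its error must come from Temme's uniform expansion via the Mills-ratio step above, while the \emph{explicit rational form} of the coefficients is transferred from Tricomi's expansion by the uniqueness/analytic-continuation argument. The remaining work is bookkeeping: confirming that the $q_k$ are polynomials of degree exactly $2k$ (equivalently, that the regular parts of $(-1)^k(2k-1)!!\,\eta^{-(2k+1)}$ and $C_k(\eta)$ cancel in the $\xi$-variable, as the explicit $b_k$ and Stirling structure force), and verifying that every neglected remainder is uniformly $O(\eta_n^{-(2N+1)}n^{-N})$ under the sole hypothesis $\sqrt{n}\,\eta_n\to\infty$.
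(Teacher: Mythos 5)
Your proposal is correct, and its analytic core coincides with the paper's proof: both start from Temme's uniform representation $Q(n\theta_n,n\lambda_n)=\bar{\Phi}(\sqrt{n\theta_n}\,\eta_n)+\frac{e^{-n\theta_n\eta_n^2/2}}{\sqrt{2\pi n\theta_n}}S$, expand the normal tail for large argument (your Mills-ratio constants $(2k-1)!!$ are exactly the paper's $A_k=2^k\Gamma(k+1/2)/\Gamma(1/2)$), cancel the singular $A_k/\eta_n^{2k+1}$ terms, and note that since $\eta_n\to0$ the tail-expansion remainder $O(\eta_n^{-(2N+1)}n^{-N})$ dominates the uniform $O(n^{-N})$ remainder. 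Where you genuinely differ is in pinning down the coefficients: the paper simply reads the decomposition $q_k(\xi)/\xi^{2k+1}-A_k/\eta^{2k+1}$ and the values $a_{k,l}$ directly off Temme's Eqs.\ (3.3)--(3.7), whereas you reconstruct the rational form by matching against Tricomi's expansion (Lemma~\ref{lem:Tricomi's expansion}) at fixed $\xi>0$; your algebra there is sound --- the prefactors do cancel via $\eta^2/2=\mu-1-\log\mu$ and \eqref{eq:definition of gamma star}, your display $\sqrt{2\pi n\theta_n}\,e^{n\theta_n\eta_n^2/2}Q(n\theta_n,n\lambda_n)=\frac{\theta_n}{\lambda_n\Gamma^*(n\theta_n)}\bigl[\sum_k b_k(\theta_n/\lambda_n)(\lambda_n n)^{-k}+O(n^{-N})\bigr]$ is correct, and it reproduces $q_0=1$, $q_1(\xi)=1+\xi+\xi^2/12$, and (from the constant term of $(1+\xi)+2(1+\xi)^2$) $a_{2,0}=3$, with the degree-$2k$ bound following since the common-denominator numerator has degree at most $\max_j[j+2(k-j)]=2k$. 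Two caveats, neither fatal. First, the analytic-continuation step is superfluous: $\sqrt{n}\,\eta_n\to+\infty$ forces $\xi_n>0$ eventually, and the Tricomi matching gives the identity $D_k(\eta(1+\xi))=(-1)^kq_k(\xi)/\xi^{2k+1}$ pointwise at \emph{every} fixed $\xi>0$, so what is actually needed near $\xi=0$ is not continuation of the identity but uniformity of the remainders, which (as you note at the end) must come from Temme. Second, your route is less independent of Temme than it may appear: the uniqueness-of-coefficients argument only makes sense because you have already granted, from Temme's theorem, that the coefficients $C_k$ are functions of $\eta$ alone, regular at $\eta=0$, with an error uniform across the transition --- precisely the hard content of the cited result. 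So your approach buys an internal cross-check and explicit low-order coefficients without consulting Temme's recursion, but it does not eliminate the reliance on the uniform expansion itself, which the paper's shorter proof invokes directly.
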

\begin{proof}
Combining Eqs. (1.4), (1.5) and (3.3) from~\cite{Temme1979} implies that, for any $N\in\NN$, as $n\to\infty$
$$
Q(n\theta_n,n\lambda_n) = \bar{\Phi}(\sqrt{n\theta_n}\eta_n) + 
\frac{e^{-n \theta_n \eta_n^2/2}}{\sqrt{2\pi  n\theta_n}}
\left[
\sum_{k=0}^{N-1} 
\frac{(-1)^{k}}{(n\theta_n)^k}
\!\!\left(\frac{q_k(\xi_n)}{\xi_n^{2k+1}}- \frac{A_k}{\eta_n^{2k+1}}\right)
+ O\left(n^{-N}\right)
\right]
$$
where
$$
A_k:=2^k \frac{\Gamma(k+1/2)}{\Gamma(1/2)}
$$
The construction of the polynomials $q_k(x)$ is discussed in~\cite[Eqs.~(3.5)-(3.7)]{Temme1979}, and
the values of the coefficients $a_{k,l}$ stated above can be obtained directly from~\cite[Eq.~(3.4)]{Temme1979}. 

Now, the well-known asymptotic expansion of $\bar{\Phi}(z)$ as $z\to+\infty$ (see e.g.~\cite[Eq.~(3.4.30)]{Temme2015}) implies that for all $M\in\NN$ we have that as $n\to\infty$
\begin{equation}
\label{eq:large argument Phi expansion}
\hspace{-15mm}
\bar{\Phi}(\sqrt{n\theta_n}\eta_n) 
= 
\frac{e^{- n\theta_n\,\eta_n^2/2}}{\sqrt{2\pi  n\theta_n}\eta_n}
\left[\sum_{k=0}^{M-1}(-1)^k\frac{A_k}{(\sqrt{n\theta_n}\eta_n)^{2k}} 
+ O\left(n^{-M}\eta_n^{-2M}\right)\right]
\end{equation}
Since $\theta_n/\lambda_n\to 1$ as $n\to\infty$, we have from~\eqref{eq:eta Taylor expansion near 1} that $\lim\limits_{n\to\infty}\eta_n=0$ and so the $N$th error term in the expansion for $\bar{\Phi}$ dominates the $N$th error term in the expansion for $Q$.
Therefore, substituting~\eqref{eq:large argument Phi expansion} with $M=N$ into the above expansion for $Q$ yields the stated result.
\end{proof}

\section{Proof of Theorem~\ref{Thm:mean_variance}}
\label{sec:mean and variance}
We first present a proposition summarising the asymptotic behaviour of $H_n$.
\begin{proposition}
\label{prop:H_n asymptotics}
Fix $\alpha \in\RR$ and $\beta\geq 0$. Define the sequence $(\lambda_n)_{n\in\NN}$ by $\lambda_n = 1 + \alpha n^{-\beta}$. Then as $n\rightarrow \infty$,
\begin{enumerate}
    \item\label{lemma_part:H_n_low_T} If $\beta=0$ and $\alpha\in(-1,0)$, then
\begin{equation*}
\hspace{-5mm}
        \frac{1}{H_n(n\lambda_n)} = \sqrt{\frac{n}{2\pi}}\,
        e^{-n\left[\alpha - \log(1+\alpha)\right]}\,[1+o(1)]
\end{equation*}
    
    \item\label{lemma_part:H_n_low_T_scalingwindow} If $\beta\in(0,1/2)$ and $\alpha<0$, then
\begin{equation*}
\hspace{-5mm}
        \frac{1}{H_n(n\lambda_n)} =\sqrt{\frac{n}{2\pi}}\, \exp{\left(-\frac{\alpha^2}{2}n^{1-2\beta}[1+o(1)]\right)}\,
            [1+o(1)]
 \end{equation*}
    
    \item\label{lemma_part:H_n_beta=1/2} If $\beta=1/2$ and $\alpha\in\RR$, then
\begin{equation*}
\hspace{-5mm}
        \frac{1}{H_n(n\lambda_n)} = \frac{\varphi(\alpha)}{\bar{\Phi}(\alpha)}\,\sqrt{n}
        \,\left[1 + o(1))\right]
\end{equation*}
    
    \item\label{lemma_part:H_n_criticalwindow} If $\beta>1/2$ and $\alpha\in\RR$, then
\begin{equation*}
\hspace{-5mm}
     \frac{1}{H_n(n\lambda_n)} = \sqrt{\frac{2}{\pi}}\,\sqrt{n}\,\left[1 + o(1)\right]
 \end{equation*}
    
    \item\label{lemma_part:H_n_high_T_scalingwindow} If $\beta\in(0,1/2)$ and $\alpha>0$, then
\begin{equation*}
\hspace{-5mm}
     \frac{1}{H_n(n\lambda_n)} =  \alpha n^{1-\beta}\left[1 + \frac{n^{2\beta-1}}{\alpha^2} + \frac{\ind(\beta\leq 1/3)n^{\beta-1}}{\alpha} - \frac{2n^{4\beta-2}}{\alpha^4} + o(n^{4\beta-2})\right]
\end{equation*}
    
    \item\label{lemma_part:H_n_high_T} If $\beta=0$ and $\alpha>0$, then
\begin{equation*}
\hspace{-5mm}
        \frac{1}{H_n(n\lambda_n)} = \alpha n
        \left[1 + \frac{(1+\alpha)}{\alpha^2\, n} 
        - \frac{(1+\alpha)(2+\alpha)}{\alpha^4\, n^2} + O(n^{-3})\right]
\end{equation*}
\end{enumerate}
\end{proposition}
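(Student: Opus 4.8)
The plan is to read off all six asymptotics from the reciprocal of~\eqref{eq:H in terms of eta}, namely
\[
\frac{1}{H_n(n\lambda_n)} = \sqrt{\frac{n}{2\pi}}\,\frac{e^{-n\eta^2(\lambda_n)/2}}{\Gamma^*(n)\,Q(n,n\lambda_n)},
\]
combined with Stirling's formula~\eqref{eq:Stirling's formula} (which gives $\Gamma^*(n)=1+O(1/n)$, hence $1/\Gamma^*(n)=1-\tfrac{1}{12n}+O(n^{-2})$) and the identity $\eta^2(\lambda)/2 = \lambda-1-\log\lambda$, immediate from~\eqref{eq:eta definition}. The six cases are then distinguished by the behaviour of $\sqrt n\,\eta(\lambda_n)$, and each is controlled by one of the three gamma-function expansions from Section~\ref{sec:preliminaries}.

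For cases~\eqref{lemma_part:H_n_low_T}--\eqref{lemma_part:H_n_criticalwindow} I would apply Temme's expansion (Lemma~\ref{lem:Temme's expansion}) with $\theta_n=1$, giving $Q(n,n\lambda_n)=\bar\Phi(\sqrt n\,\eta(\lambda_n))+O(e^{-n\eta^2(\lambda_n)/2}/\sqrt n)$. Using the Taylor expansion~\eqref{eq:eta Taylor expansion near 1} one checks that $\sqrt n\,\eta(\lambda_n)\to-\infty$ in cases~\eqref{lemma_part:H_n_low_T} and~\eqref{lemma_part:H_n_low_T_scalingwindow} (so $Q\to1$), $\to\alpha$ in case~\eqref{lemma_part:H_n_beta=1/2} (so $Q\to\bar\Phi(\alpha)$), and $\to0$ in case~\eqref{lemma_part:H_n_criticalwindow} (so $Q\to1/2$), with the error negligible relative to the limit in each instance. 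Substituting these limits, replacing $\Gamma^*(n)$ by $1+o(1)$, and evaluating $n\eta^2(\lambda_n)/2=n(\lambda_n-1-\log\lambda_n)$ via $\lambda-1-\log\lambda=\tfrac12(\lambda-1)^2[1+o(1)]$ yields the four stated forms; in case~\eqref{lemma_part:H_n_beta=1/2} one additionally uses $\sqrt{1/2\pi}\,e^{-\alpha^2/2}=\varphi(\alpha)$.

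Case~\eqref{lemma_part:H_n_high_T} is the only one in which $\lambda_n\equiv1+\alpha$ stays bounded away from $1$, so here I would use Tricomi's expansion (Lemma~\ref{lem:Tricomi's expansion}) with $\theta_n=1$. Since $H_n(n\lambda)=\Gamma(n)Q(n,n\lambda)/[(n\lambda)^n e^{-n\lambda}]$ by~\eqref{Def:H_n}, the prefactors cancel and one gets $H_n(n\lambda)=(n\lambda)^{-1}[\sum_k b_k(1/\lambda)(\lambda n)^{-k}+O(n^{-N})]$; taking $N=3$, evaluating $b_0,b_1,b_2$ at $1/\lambda=1/(1+\alpha)$, and inverting the resulting power series in $1/n$ produces case~\eqref{lemma_part:H_n_high_T} after a routine computation.

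The main work, and the expected obstacle, is case~\eqref{lemma_part:H_n_high_T_scalingwindow}, where $\sqrt n\,\eta(\lambda_n)\to+\infty$ so $\bar\Phi$ vanishes and Lemma~\ref{lem:Temme's expansion} is too weak. Here I would invoke Lemma~\ref{lem:Temme's expansion after cancellation} with $\theta_n=1$, $\xi_n=\lambda_n-1=\alpha n^{-\beta}$ and $N=3$; combined with~\eqref{eq:H in terms of eta} this gives $H_n(n\lambda_n)=\Gamma^*(n)\,S_n/n$ with $S_n=\sum_{k=0}^{2}(-1)^k q_k(\xi_n)/(n^k\xi_n^{2k+1})+O(\eta_n^{-7}n^{-3})$. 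Substituting $\xi_n=\alpha n^{-\beta}$, expanding the $q_k$ via the coefficients $a_{k,l}$, factoring out the leading term $n^\beta/\alpha$, and inverting the bracket yields the claimed expansion. Two points require care. First, the coefficient $a_{1,2}=1/12$ generates a term $1/(12n)$ in the inverted series that must cancel exactly against the $-1/(12n)$ coming from $1/\Gamma^*(n)$; this cancellation is precisely what prevents an $n^{-1}$ term from appearing, and it is essential when $\beta\le1/4$. Second, the $a_{1,1}$ contribution produces a term of order $n^{\beta-1}$ that must be weighed against the order-$n^{4\beta-2}$ term: since $\beta-1\ge4\beta-2$ exactly when $\beta\le1/3$, the former is retained only in that range, which is the origin of the factor $\ind(\beta\le1/3)$. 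Finally one verifies that $N=3$ suffices, since $\eta_n\sim\alpha n^{-\beta}$ and $\beta<1/2$ make the relative error $O(n^{3(2\beta-1)})=o(n^{4\beta-2})$.
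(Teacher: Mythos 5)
Your proposal is correct and follows essentially the same route as the paper's proof: inverting \eqref{eq:H in terms of eta}, classifying the six cases by the limit of $\sqrt{n}\,\eta(\lambda_n)$, applying Lemma~\ref{lem:Temme's expansion} with $\theta_n=1$ for Cases~\eqref{lemma_part:H_n_low_T}--\eqref{lemma_part:H_n_criticalwindow}, Lemma~\ref{lem:Temme's expansion after cancellation} with $N=3$ for Case~\eqref{lemma_part:H_n_high_T_scalingwindow}, and Lemma~\ref{lem:Tricomi's expansion} with $N=3$ for Case~\eqref{lemma_part:H_n_high_T}. You even identify the same delicate points the paper handles (the exact cancellation of the $1/(12n)$ term from $a_{1,2}$ against Stirling's $-1/(12n)$, the $\beta\le 1/3$ and $\beta\le 1/4$ exponent comparisons that the paper encodes via $\psi(k,l)$, and the adequacy of the $O(n^{6\beta-3})$ relative error), so there is nothing to add.
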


\begin{proof}
Let $\eta_n:=\eta(1+\alpha n^{-\beta})$.
The fundamental difference between the various cases stated above is the behaviour of $\sqrt{n}\eta_n$ as $n\to\infty$. 
If $\beta>0$, then~\eqref{eq:eta Taylor expansion near 1} implies
\begin{equation}
\label{eq:root(n) eta asymptotics}
\sqrt{n}\,\eta_n = \alpha\,n^{1/2 -\beta}[1+O(n^{-\beta})]
\end{equation}
while if $\beta=0$ and $\alpha>-1$, then we have
\begin{equation}
\label{eq:root(n) eta asymptotics when beta=0}
\sqrt{n}\,\eta_n = \eta(1+\alpha)\,n^{1/2}
\end{equation}
with $\eta(1+\alpha)$ positive (negative) iff $\alpha$ is positive (negative).

It follows that for Cases~\eqref{lemma_part:H_n_low_T} and~\eqref{lemma_part:H_n_low_T_scalingwindow} one has 
$\lim\limits_{n\to\infty}\sqrt{n}\eta_n=-\infty$.
Setting $\theta_n=1$ in Lemma~\ref{lem:Temme's expansion}, and noting that $\lim\limits_{x\to-\infty}\bar{\Phi}(x)=1$, we then have 
\begin{equation}
\label{eq:Q(n,lambda n) asymptotics for cases i and ii}
Q(n,n\lambda_n) = 1 + o(1)
\end{equation}
Consequently, from~\eqref{eq:H in terms of eta} and~\eqref{eq:Stirling's formula} we have
$$
\frac{1}{H_n(n\lambda_n)} = \sqrt{\frac{n}{2\pi}}\,e^{-n\,\eta_n^2/2}
[1+o(1)]
$$
Cases~\eqref{lemma_part:H_n_low_T} and~\eqref{lemma_part:H_n_low_T_scalingwindow} then follow by combining with~\eqref{eq:eta definition} and~\eqref{eq:root(n) eta asymptotics}, respectively.

In Cases~\eqref{lemma_part:H_n_beta=1/2} and~\eqref{lemma_part:H_n_criticalwindow}, Eq.~\eqref{eq:root(n) eta asymptotics} shows that
$\sqrt{n}\eta_n$ converges to a finite limit, of $\alpha$ and $0$, respectively. From Lemma~\eqref{lem:Temme's expansion} we then obtain
\begin{eqnarray}
\label{Q(n,lambda n) asymptotics for Cases iii and iv}
\lim\limits_{n\to\infty}Q(n,n\lambda_n) =
\begin{cases}
\bar{\Phi}(\alpha), & \beta=1/2,\\
\bar{\Phi}(0), & \beta>1/2
\end{cases}
\end{eqnarray}
The stated result then follows from~\eqref{eq:H in terms of eta} by applying~\eqref{eq:Stirling's formula}.

In Cases~\eqref{lemma_part:H_n_high_T_scalingwindow} and~\eqref{lemma_part:H_n_high_T}, Eqs.~\eqref{eq:root(n) eta asymptotics} and~\eqref{eq:root(n) eta asymptotics when beta=0} respectively show that $\lim\limits_{n\to\infty}\sqrt{n}\eta_n=+\infty$. Consider first Case~\eqref{lemma_part:H_n_high_T_scalingwindow}. 
Applying Lemma~\ref{lem:Temme's expansion after cancellation} with $\theta_n = 1$ and $N=3$ to~\eqref{eq:H in terms of eta} we obtain
$$
    \frac{n\,H_n(n\lambda_n)}{\Gamma^*(n)} 
    =
    \sum_{k=0}^{2}(-1)^k
    \frac{q_k(\alpha n^{-\beta})}{(\alpha\,n^{-\beta})^{2k+1}\,n^k} 
    + O(n^{7\beta-3}) \nonumber \\
$$
where in obtaining the error term we used $\eta_n\sim \alpha n^{-\beta}$. Using the fact that $q_k$ is a polynomial of degree $2k$, it then follows that
\begin{equation}
\label{Eq:BnH}
    \frac{n\,H_n(n\lambda_n)}{\Gamma^*(n)} 
    =
    \sum_{k=0}^{2}\sum_{l=0}^{2k}
    \frac{(-1)^ka_{k,l}}{\alpha^{2k+1-l}}n^{\psi(k,l)} +  O(n^{7\beta - 3})
\end{equation}
where 
$$
\psi(k,l):=(2k+1-l)\beta-k = \beta-(1-2\beta)k-\beta l
$$

Since $\beta\in(0,1/2)$, the function $\psi(k,l)$ is decreasing with respect to the usual (coordinatewise) partial order on $\NN^2$, and we therefore have $\psi(2,0)>\psi(k,l)$ for all $(k,l)$ with $k\geq 2$ and $(k,l)\neq (2,0)$. 
Therefore, using the explicit values of the $a_{k,l}$ from Lemma~\ref{lem:Temme's expansion after cancellation} we obtain
$$
\sum_{k=0}^{2}\sum_{l=0}^{2k}
\frac{(-1)^ka_{k,l}}{\alpha^{2k+1-l}}n^{\psi(k,l)} \\
 =
 \frac{n^{\psi(0,0)}}{\alpha} - \frac{n^{\psi(1,0)}}{\alpha^3} - \frac{n^{\psi(1,1)}}{\alpha^2} - \frac{n^{\psi(1,2)}}{12\,\alpha} 
 + \frac{3\,n^{\psi(2,0)}}{\alpha^5} + o(n^{\psi(2,0)}) 
$$
Since $\psi$ is decreasing we have $\psi(0,0) >\psi(1,0)>\psi(2,0)$ and $\psi(1,1)>\psi(1,2)$, but the way $\psi(2,0)$ relates to $\psi(1,1)$ and $\psi(1,2)$ depends on the value of $\beta$. Inserting explicit values for $\psi(k,l)$
and discarding all terms which are $o(n^{\psi(2,0)})$ we then obtain
$$
    \frac{n H_n(n\lambda_n)}{\Gamma^*(n)}
    =
    \frac{n^{\beta}}{\alpha} - \frac{n^{3\beta -1}}{\alpha^3} - \frac{\ind(\beta\leq 1/3)n^{2\beta - 1}}{\alpha^2}
 - \frac{\ind(\beta\leq 1/4)n^{\beta - 1}}{12\alpha} +
    \frac{3n^{5\beta - 2}}{\alpha^5} + o(n^{5\beta - 2})
$$
where we have used the fact that $n^{7\beta - 3}=o(n^{5\beta -2})$ for $\beta\in(0,1/2)$. Applying~\eqref{eq:Stirling's formula} and carefully expanding the reciprocal then yields the stated result.

Finally, Case~\eqref{lemma_part:H_n_high_T} follows from 
Lemma~\ref{lem:Tricomi's expansion} by setting $\theta_n =1$, $\lambda = 1 + \alpha$ and $N=3$, which yields
\begin{equation*}
        H_n(n\lambda_n) = \frac{1}{\alpha n}\left[1 - \frac{1+\alpha}{\alpha^2 n} + \frac{2(1+\alpha)^2+(1+\alpha)}{\alpha^4 n^2} + O(n^{-3})\right]
\end{equation*}
\end{proof}

\begin{proof}[Proof of Theorem~\ref{Thm:mean_variance}]
Let $\lambda_n=1+\alpha n^{-\beta}$ with $\alpha,\beta$ as chosen in any of the cases stated in Theorem~\ref{Thm:mean_variance}, and set $z_n=1/\lambda_n$. Then, from~\eqref{eq:mean in terms of H} and~\eqref{eq:var in terms of H} we obtain
\begin{eqnarray}
   \EE_{n,z_n}(\sL_n) &= -\alpha n^{1-\beta} + \frac{1}{H_n(n\lambda_n)} -1\\
   \var_{n,z_n}(\sL_n) &= n 
   + \left(1+\frac{1}{H_n(n\lambda_n)}\right)\alpha\,n^{1-\beta}
    - \frac{1}{H_n^2(n\lambda_n)}
\end{eqnarray}
The stated results then follow directly from Proposition~\ref{prop:H_n asymptotics} (combined with~\eqref{eq:conditional normal moments} in the case of $\beta=1/2$).
\end{proof}

\section{Proof of Theorem~\ref{Thm:CLT}}
\label{sec:limit theorems}
We now turn to the proof of Theorem~\ref{Thm:CLT}.
\begin{proof}[Proof of Theorem~\ref{Thm:CLT}]
We prove the weak convergence results stated in Theorem~\ref{Thm:CLT} by establishing pointwise convergence of the standardised tail distribution. 

Let $\lambda_n=1+\alpha n^{-\beta}$, and set $z_n=1/\lambda_n$. 
Fix $y\in\RR$, and define the sequence $\kappa_n(y)$ via
\begin{equation}
    \kappa_n(y) := \begin{cases} 
    |\alpha|n + y\sqrt{(1+\alpha)n} & \text{if}~\beta = 0~\text{and}~\alpha\in(-1,0) \\[10pt]
    |\alpha|n^{1-\beta} + y\sqrt{n} & \text{if}~\beta\in(0,1/2)~\text{and}~\alpha<0 \\[10pt]
    (y-\alpha)\sqrt{n}          & \text{if}~\beta=1/2,~\text{and}~\alpha\in\RR \\ [10pt]
     y\sqrt{n}                  & \text{if}~\beta>1/2,~\text{and}~\alpha\in\RR\\[10pt]
     y \,n^{\beta}/\alpha       & \text{if}~\beta\in(0,1/2),~\text{and}~\alpha>0\\[10pt]
     y - 1                      & \text{if}~\beta=0,~\text{and } \alpha>0
    \end{cases}
\end{equation}
Now define the sequence $(\mu_n)_{n\in\NN}$ so that $n\mu_n = n - \floor{\kappa_n(y)} - 1$, and observe that
$$
\mu_n = 1 - \frac{\kappa_n(y)}{n} + O(n^{-1})
$$
It follows from~\eqref{eq:eta Taylor expansion near 1} that for Cases~\eqref{CLT:low temp}-\eqref{CLT:critical window} of Theorem~\ref{Thm:CLT} we have
$$
\sqrt{n \mu_n}\eta(\lambda_n/\mu_n) = y + o(1), \qquad n\to\infty
$$
and Lemma~\ref{lem:Temme's expansion} then yields
\begin{equation}
\label{eq:Q(n mu, n lambda) asymptotics for cases i to iv}
   \lim\limits_{n\to\infty} Q(n\mu_n,n\lambda_n) = \bar{\Phi}(y) 
\end{equation}

In Cases~\eqref{CLT:low temp} and~\eqref{CLT:low temp window} we have $\kappa_n(y)\to+\infty$ for all $y\in\RR$, and 
it then follows from~\eqref{eq:tail_distribution}, 
\eqref{eq:Q(n,lambda n) asymptotics for cases i and ii}
and~\eqref{eq:Q(n mu, n lambda) asymptotics for cases i to iv}
that for all $y\in\RR$
$$
\lim\limits_{n\to\infty}
\PP_{n,z_n}(\sL_n>\kappa_n(y)) = \bar{\Phi}(y) = \PP(X>y)
$$
The stated weak convergence result then follows immediately.

Similarly, applying~\eqref{Q(n,lambda n) asymptotics for Cases iii and iv} and~\eqref{eq:Q(n mu, n lambda) asymptotics for cases i to iv} to~\eqref{eq:tail_distribution}, we see that in Case~\eqref{CLT:boundary} we have
$$
\lim\limits_{n\to\infty} 
\PP_{n,z_n}(\sL_n>\kappa_n(y)) 
= \ind(y<\alpha) + \ind(y\ge\alpha)\frac{\bar{\Phi}(y)}{\bar{\Phi}(\alpha)}
=\PP(X>y|X>\alpha)
$$
while in Case~\eqref{CLT:critical window} we have
$$
\lim\limits_{n\to\infty}
\PP_{n,z_n}(\sL_n>\kappa_n(y)) 
= \ind(y<0) + \ind(y\ge0)\frac{\bar{\Phi}(y)}{\bar{\Phi}(0)}
=\PP(X>y|X>0)
$$

We now turn attention to the non-Gaussian limits. Consider Case~\eqref{CLT:high temp window}. 
Since
\begin{eqnarray}
    \frac{\lambda_n}{\mu_n} = \frac{1 + \alpha n^{-\beta}}{1 - yn^{\beta - 1}/\alpha + O(n^{-1})} = 1 + \alpha n^{-\beta} + O(n^{\beta -1})  \nonumber
\end{eqnarray}
we see that in either of the cases $\theta_n=\mu_n$ or $\theta_n=1$, we have
\begin{equation}
\label{eq:xi asymptotics for high temp window CLT}
    \frac{\lambda_n}{\theta_n} = \lambda_n + O(n^{\beta-1})
\end{equation}
and therefore also
\begin{equation}
\label{eq:eta asymptotics for high temp window CLT}
\eta\left(\lambda_n/\theta_n\right) = \alpha n^{-\beta}[1+o(1)]
\end{equation}
Consequently, setting $N=1$ in Lemma~\ref{lem:Temme's expansion after cancellation} and inserting~\eqref{eq:xi asymptotics for high temp window CLT} and~\eqref{eq:eta asymptotics for high temp window CLT} implies that 
$$
    Q(n\theta_n,n\lambda_n) = \frac{e^{-n\theta_n\eta^2(\lambda_n/\theta_n)/2}}{\sqrt{2\pi n\theta_n}}
    \frac{n^{\beta}}{\alpha} \left[1 + O(n^{2\beta - 1})\right]    
$$
and it follows that
\begin{eqnarray*}
    \PP_{n,z_n}(\walklength > \kappa_n) &=&
    \ind(y<0) + 
    \ind(y\ge0)\,\frac{Q(n\mu_n,n\lambda_n)}{Q(n,n\lambda_n)} \nonumber \\ 
    & = &  \exp\left[\frac{n\eta^2(\lambda_n) - n\mu_n\eta^2(\lambda_n/\mu_n)}{2}\right]\left[1+o(1)\right]
\end{eqnarray*}
Moreover, setting $\zeta_n:=\floor{\kappa_n(y)} + 1$, so that $\mu_n = 1-\zeta_n/n$ and $\zeta_n=y n^{\beta}/\alpha +O(1)$, yields
\begin{eqnarray*}
 \hspace{-20mm} \frac{n\eta^2(\lambda_n) - n\mu_n\eta^2(\lambda_n/\mu_n)}{2}
  &=& -\zeta_n\,\log(\lambda_n) - \zeta_n - n\,\log\left(1-\frac{\zeta_n}{n}\right) + \zeta_n\log\left(1-\frac{\zeta_n}{n}\right) \\
&=&
-\zeta_n\,\log(\lambda_n) - \zeta_n - n\,\log\left(1-\frac{\zeta_n}{n}\right) + O(n^{2\beta-1})
\\
&=&
-\zeta_n\,\log(\lambda_n)+ O(n^{2\beta-1})
\\
  &=& -y + o(1)
\end{eqnarray*}
We then conclude that
$$
\lim_{n\to\infty}\PP_{n,z_n}(\sL_n>\kappa_n(y)) = \ind(y<0)+\ind(y\ge0)e^{-y} = \PP(Y>y)
$$

Finally, consider Case~\eqref{CLT:high temp} and let $\lambda:=1+\alpha$. Since
$$
\mu_n=1-\floor{y}/n
$$
we see that in either of the cases $\theta_n=\mu_n$ or $\theta_n=1$, we have
$$
    \lambda - \theta_n = \alpha + O(n^{-1})
$$
and so setting $N=1$ in Lemma~\ref{lem:Tricomi's expansion} yields
$$
Q(n\theta_n,n\lambda) = 
\frac{(n\lambda)^{n\theta_n}}{\Gamma(n\theta_n)}
\frac{e^{-n\lambda}}{\alpha\,n}\left[1+O(n^{-1})\right]
$$
From~\eqref{eq:tail_distribution} we then have 
$$
\PP_{n,z_n}(\sL_n>y-1) =
\ind(y<1) + 
\ind(y\ge1)(1+\alpha)^{-\floor{y}}
\frac{n^{-\floor{y}}\Gamma(n)}{\Gamma(n-\floor{y})}\left(1+O(n^{-1})\right)
$$
However~\eqref{eq:definition of gamma star} and~\eqref{eq:Stirling's formula} imply that
$$
\frac{\Gamma(n)}{\Gamma(n-\floor{y})} \sim n^{\floor{y}}
$$
and so we obtain
$$
\lim_{n\to\infty}\PP_{n,z_n}(\sL_n > y - 1) = 
\ind(y<1) + 
\ind(y\ge1)(1+\alpha)^{-\floor{y}} 
= 
\PP(W_{\alpha} > y)
$$
\end{proof}

\ack{This work was supported under the Australian Research Council's Discovery Projects funding scheme (Project Nos. DP140100559 and DP180100613.) Y. D. acknowledges the support by the National Key R\&D Program of China under Grant No. 2016YFA0301604 and by the National Natural Science Foundation of China under Grant No. 11625522. We thank Gordon Slade for bringing to our attention his recent preprint~\cite{Slade2019}. }

\section*{References}
\bibliographystyle{unsrt}

\end{document}